\documentclass[a4paper]{article}
\usepackage{amsmath,amsthm,amssymb,indentfirst}
\usepackage[english]{babel} 
\usepackage[big]{layaureo}

\theoremstyle{definition}
\newtheorem*{proposition}{Proposition}
\newtheorem{lemma}{Lemma}
\newtheorem*{theorem}{Theorem}

%\title{A Pedagogical Note on the Existence of Equilibrium Prices}

\title{Existence of Equilibrium Prices:\\ A Pedagogical Proof\thanks{I would like to thank Giulio Codognato and Marialaura Pesce for their helpful suggestions.}}
\author{Simone Tonin\thanks{Durham University Business School, Durham University, Durham, DH1 3LB, UK}}
\begin{document}
\maketitle
\begin{abstract}Under the same assumptions made by Mas-Colell et al. (1995), I develop a short, simple, and complete proof of existence of equilibrium prices based on excess demand functions. The result is obtained by applying the Brouwer fixed point theorem to a trimmed simplex which does not contain prices equal to zero. The mathematical techniques are based on some results obtained in Neuefeind (1980) and Geanakoplos (2003).
\end{abstract}

\section{Introduction}
This paper aims to provide a short, simple, and complete proof of existence of equilibrium prices under the same set of assumptions made in Mas-Colell et al. (1995). The existence result is derived by two lemmas which define, respectively, the non-empty compact convex set and the continuous function required by the Brouwer fixed point theorem. As usual, it is shown that the fixed point is an equilibrium price vector.\par
It would be pretentious to say that the proof is new as it follows closely some results of Neuefeind (1980) and Geanakoplos (2003). However, the approach developed here highlights clearly the central role of fixed point theorems in proving the existence of equilibrium prices. This can be interesting from a pedagogical and historical point of view as the two results are strongly linked. I further discuss this point in the last section.\par
Let me briefly outline the results of Neuefeind (1980) and Geanakoplos (2003). Neuefeind (1980) showed that, under a proper boundary behaviour of the excess demand function, it is possible to consider a trimmed simplex which does not contain prices equal to zero. In the first lemma I prove the same result by considering a different boundary condition. Geanakoplos (2003) added a perturbation in a correspondence considered by Debreu (1959) to make it single-valued and to apply the Brouwer fixed point theorem. The second lemma states this key result and I report its proof for completeness. Furthermore, by using the result of Neuefeind (1980), the boundary condition of the excess demand function is used in a different way from Geanakoplos (2003) (for a detailed analysis of the boundary condition see Ruscitti, 2012).

\section{Mathematical model}
Consider an exchange economy with $L$ commodities, $l=1,\dots,L$. Let $z:\mathbb{R}_{++}^L\rightarrow\mathbb{R}^L$ be the exchange economy's excess demand function defined over the set of positive price vectors. I make the same assumptions used in Proposition 17.C.1 of Mas-Colell et al. (1995):
\begin{enumerate}
\item[(A1)]$z(\cdot)$ is continuous;
\item[(A2)]$z(\cdot)$ is homogeneous of degree zero;
\item[(A3)]$p\cdot z(p)=0$ for all $p\in\mathbb{R}_{++}^L$ (Walras' law);
\item[(A4)]There is an $s>0$ such that $z_l(p)>-s$ for all $l$ and for all $p\in\mathbb{R}_{++}^L$.
\item[(A5)]If $p^n\rightarrow p$, where $p\neq 0$ and $p_l=0$ for some $l$,\footnote{The symbol $0$ denotes the origin in $\mathbb{R}^L$ as well as the real number zero.} then $$\mbox{max}\{z_1(p^n),\dots, z_L(p^n)\}\rightarrow+\infty.$$\end{enumerate}
Exchange economies with a positive aggregate endowment of each commodity and with consumers having continuous, strongly monotone, and strictly convex preferences satisfy Assumptions A1-A5 (see Proposition 17.B.2 in Mas-Colell et al., 1995).\par
Let $\Delta=\{p\in \mathbb{R}_+^L:\sum_l p_l=1\}$ be the unit simplex. Since the excess demand function is homogeneous of degree zero, its domain can be restricted to the interior of the unit simplex $\mbox{int}\Delta$. However, this is an open set and to apply the fixed point theorem it is required a closed set which contains only positive prices. For this reason, I define a trimmed simplex $\Delta^\epsilon=\{p\in \Delta:p_l\geq\epsilon, \mbox{ for all }l\}$, with $\epsilon\in(0,\frac{1}{L}]$. In the paper it is always assumed that $\epsilon$ lies in $(0,\frac{1}{L}]$. Finally, in the proof I will repeatedly use the price vector $\bar{q}=(\frac{1}{L},\dots,\frac{1}{L})$, at which all prices are equal, and the following straightforward result.
\begin{proposition} The sets $\Delta$ and $\Delta^\epsilon$ are non-empty, compact, and convex.\end{proposition}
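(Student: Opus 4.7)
The plan is to verify each of the three properties directly, treating $\Delta^\epsilon$ as the more general case and noting that $\Delta = \Delta^0$ (if we allow $\epsilon=0$ in the definition) satisfies the same arguments with obvious adjustments. Since the statement is essentially a sanity check needed before invoking Brouwer, I do not expect any substantive obstacle; the only thing to be careful about is making sure the chosen range $\epsilon\in(0,\tfrac{1}{L}]$ is actually used where it matters, namely for non-emptiness.

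For non-emptiness, I would simply exhibit the centroid $\bar q=(\tfrac{1}{L},\dots,\tfrac{1}{L})$. It lies in $\Delta$ because its coordinates sum to $1$, and it lies in $\Delta^\epsilon$ because $\epsilon\le \tfrac{1}{L}$ by hypothesis on $\epsilon$. This is the single place where the upper bound on $\epsilon$ is invoked; without it, $\Delta^\epsilon$ would be empty.

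For compactness, I would argue by Heine--Borel in $\mathbb{R}^L$. Boundedness is immediate since every $p\in\Delta$ satisfies $0\le p_l\le 1$ for all $l$, and $\Delta^\epsilon\subseteq\Delta$. For closedness, I would write each set as the intersection of finitely many closed sets: the hyperplane $\{p:\sum_l p_l=1\}$ and the closed half-spaces $\{p:p_l\ge 0\}$ (respectively $\{p:p_l\ge\epsilon\}$) for $l=1,\dots,L$. Each of these is the preimage of a closed subset of $\mathbb{R}$ under a continuous linear functional, hence closed, and a finite intersection of closed sets is closed.

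For convexity, the same decomposition does the job: each defining set is convex (hyperplanes and half-spaces are convex), and an arbitrary intersection of convex sets is convex. Alternatively, one can verify convexity by hand, noting that if $p,q\in\Delta^\epsilon$ and $\lambda\in[0,1]$, then $\lambda p+(1-\lambda)q$ has coordinates summing to $\lambda+(1-\lambda)=1$ and each coordinate is at least $\lambda\epsilon+(1-\lambda)\epsilon=\epsilon$. This concludes all three properties for both sets.
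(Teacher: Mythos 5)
Your proof is correct and complete; note that the paper itself offers no proof at all, simply labelling this a ``straightforward result,'' so your write-up is exactly the standard verification the author had in mind. In particular, your use of the centroid $\bar{q}=(\frac{1}{L},\dots,\frac{1}{L})$ for non-emptiness matches the vector the paper singles out immediately before the proposition, and you correctly identify that the hypothesis $\epsilon\le\frac{1}{L}$ is needed precisely (and only) there.
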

I finally state the theorem of existence of equilibrium prices.
\begin{theorem}Under Assumptions A1-A5, there exists a $p^*\in\mathbb{R}_{++}^L$ such that $z(p^*)=0$.\end{theorem}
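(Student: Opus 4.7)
I would prove existence by applying the Brouwer fixed point theorem to a continuous self-map of the trimmed simplex $\Delta^\epsilon$, for some $\epsilon>0$ to be chosen small enough. The Proposition already supplies that $\Delta^\epsilon$ is a non-empty compact convex set, so the topological hypotheses of Brouwer on the domain are in place. What is left is (i)~to build a continuous self-map of $\Delta^\epsilon$ whose fixed points encode market clearing, and (ii)~to arrange that the fixed point lies in the \emph{interior} of $\Delta^\epsilon$, so that the artificial constraints $q_l\geq\epsilon$ become inactive and the first-order conditions force $z(p^*)=0$.

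\textbf{The map and its fixed point.} Following the Geanakoplos perturbation of the Debreu correspondence, I would set
$$
f_\epsilon(p)\;=\;\arg\max_{q\in\Delta^\epsilon}\Bigl[q\cdot z(p)-\tfrac{1}{2}\|q-p\|^2\Bigr].
$$
The quadratic penalty makes the objective strictly concave in $q$, so the maximizer is unique; continuity of $f_\epsilon$ in $p$ then follows from Berge's theorem of the maximum together with A1. Thus $f_\epsilon:\Delta^\epsilon\to\Delta^\epsilon$ is a continuous self-map, and Brouwer yields a fixed point $p^*\in\Delta^\epsilon$.

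\textbf{Pushing the fixed point into the interior.} This is the step I expect to be the main obstacle, and it is where the Neuefeind-style first lemma should do the work. The goal is to choose $\epsilon$ so small that no $p$ with $p_l=\epsilon$ for some coordinate $l$ can be fixed by $f_\epsilon$. Heuristically, if $p_l=\epsilon$ with $\epsilon$ tiny, then A5 forces some other coordinate $z_{l'}(p)$ to be arbitrarily large, while A4 bounds $z$ from below uniformly by $-s$; consequently the linear term $q\cdot z(p)$ in the objective rewards shifting mass onto coordinate $l'$ by an amount that dwarfs the $O(1)$ quadratic penalty, so the maximizer $f_\epsilon(p)$ moves strictly off the face $\{q_l=\epsilon\}$ and cannot equal $p$. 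Turning this heuristic into a rigorous choice of $\bar\epsilon\in(0,1/L]$, such that the Brouwer fixed point of $f_{\bar\epsilon}$ necessarily satisfies $p^*_l>\bar\epsilon$ for every $l$, is the delicate part.

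\textbf{From fixed point to equilibrium.} Once $p^*$ is interior to $\Delta^{\bar\epsilon}$, the only active constraint in the program defining $f_{\bar\epsilon}(p^*)=p^*$ is the affine one $\sum_l q_l=1$. The gradient of $q\cdot z(p^*)-\tfrac12\|q-p^*\|^2$ in $q$ evaluated at $q=p^*$ is exactly $z(p^*)$, so the Lagrange first-order condition gives $z(p^*)=\lambda(1,\dots,1)$ for some scalar $\lambda\in\mathbb{R}$. Taking the inner product with $p^*$ and applying Walras' law (A3) yields $0=p^*\cdot z(p^*)=\lambda\sum_l p^*_l=\lambda$, hence $z(p^*)=0$. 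Since $p^*\in\Delta^{\bar\epsilon}\subset\mathbb{R}^L_{++}$, this delivers the desired equilibrium price vector.
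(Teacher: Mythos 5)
Your architecture coincides with the paper's (Geanakoplos's quadratic perturbation of the Debreu correspondence, Brouwer on the trimmed simplex, then interiority kills the artificial constraints), and your final Lagrangian step is correct: at an interior fixed point the first-order condition gives $z(p^*)=\lambda(1,\dots,1)$ and Walras' law forces $\lambda=0$; this is a clean alternative to the paper's coordinatewise argument that $z_l(p^*)<0$ would force $p^*_l=\epsilon$. But there is a genuine gap exactly where you flag ``the delicate part'': you never choose $\bar\epsilon$ nor prove that the fixed point is interior, and the heuristic you offer does not go through as stated. A5 is a statement about sequences $p^n\rightarrow p$ with $p$ on the boundary of $\Delta$; for a \emph{fixed} $\epsilon$ it gives no pointwise lower bound on $\max_l z_l(p)$ at a point with $p_l=\epsilon$, so ``$p_l=\epsilon$ tiny implies some $z_{l'}(p)$ is arbitrarily large'' already requires a compactness/subsequence argument to be made uniform in $\epsilon$. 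Even granting that, passing from ``some coordinate of $z(p)$ is huge'' to ``$f_\epsilon(p)\neq p$'' needs care: if $p_{l'}$ is already at its maximal value $1-(L-1)\epsilon$ you cannot shift mass onto $l'$, and ruling this out requires invoking Walras' law and A4. None of this is carried out, so as written you have proved existence of a fixed point but not that it is an equilibrium.

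The paper avoids these difficulties by reversing the logic, and you would do well to adopt one of its two moves. First (Lemma 1), it fixes $\epsilon$ once and for all so that $Q=\{p\in\mbox{int}\,\Delta:\sum_l z_l(p)\leq 0\}\subseteq\mbox{int}\,\Delta^\epsilon$: if not, a sequence $p^{k_n}\in Q$ converges to a boundary point of $\Delta$, A5 sends $\max_l z_l(p^{k_n})$ to $+\infty$, and the uniform lower bound $-s$ from A4 then forces $\sum_l z_l(p^{k_n})>0$ eventually, contradicting $p^{k_n}\in Q$. Note that A5 is applied only along sequences in $Q$, where the contradiction is immediate; no claim about arbitrary boundary points of $\Delta^\epsilon$ is needed. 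Second, in the theorem's proof, strict concavity of the perturbed objective at the fixed point yields $g(\alpha p+(1-\alpha)p^*,p^*)<g(p^*,p^*)=0$, which simplifies via Walras' law to $p\cdot z(p^*)<\alpha\|p-p^*\|^2$ for every $\alpha\in(0,1]$, hence $p\cdot z(p^*)\leq 0$ for all $p\in\Delta^\epsilon$; taking $p=\bar q$ gives $\sum_l z_l(p^*)\leq 0$, i.e., $p^*\in Q\subseteq\mbox{int}\,\Delta^\epsilon$. Supplying this chain (or an equivalent rigorous version of your boundary heuristic) is necessary to complete your proof.
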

It is immediate to see that $z(p)=0$ corresponds to the system of L equations in L unknowns introduced by Walras (1874-7) and the solution $p^*$ is the equilibrium price vector at which all markets clear.

\section{Proof}
The first step consists in defining the non-empty, compact, and convex set required by the fixed point theorem. To this end, the result of the next lemma implies that if an equilibrium price vector $p^*$ exists, then there is an $\epsilon$ such that $p^*$ belongs to the interior of the set $\Delta^\epsilon$.
\begin{lemma}Let $Q=\{p\in \mbox{int}\Delta: \sum_l z_l(p)\leq 0\}$. Under Assumptions A1-A5, there exists an $\epsilon\in(0,\frac{1}{L}]$ such that $Q\subseteq \mbox{int}\Delta^\epsilon$.\end{lemma}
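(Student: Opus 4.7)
The plan is to argue by contradiction. Suppose no such $\epsilon$ exists; then for each $n\in\mathbb{N}$ I can find $p^n\in Q$ with $\min_l p^n_l<1/n$. Since $\Delta$ is compact (by the Proposition), after passing to a subsequence I may assume $p^n\to p\in\Delta$. Because $\sum_l p_l=1$ the limit satisfies $p\neq 0$, yet $\min_l p_l=0$, so $p$ sits on the relative boundary of $\Delta$ with at least one vanishing coordinate, which is exactly the hypothesis needed to invoke Assumption A5.

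Applying A5 then yields $\max_l z_l(p^n)\to+\infty$. Coupled with the uniform lower bound $z_l(p^n)>-s$ from Assumption A4, this forces
$$\sum_l z_l(p^n)\;\geq\;\max_l z_l(p^n)-(L-1)s\;\longrightarrow\;+\infty,$$
which contradicts $\sum_l z_l(p^n)\leq 0$. Hence some $\delta>0$ exists such that $p_l\geq\delta$ for every $l$ and every $p\in Q$. Choosing $\epsilon\in(0,\min\{\delta,1/L\})$ strictly smaller than $\delta$ guarantees $p_l>\epsilon$ for all $p\in Q$, i.e.\ $Q\subseteq\mbox{int}\Delta^\epsilon$ (note that $Q$ is in any case non-empty, as $\bar q\in Q$ by Walras' law, which forces $\delta\leq 1/L$ and makes the choice of $\epsilon$ meaningful).

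The substantive step is the coupling of A4 and A5: A5 by itself says only that the largest component of $z(p^n)$ explodes, which is a priori compatible with $\sum_l z_l(p^n)$ staying bounded or even drifting to $-\infty$, since some other component might diverge negatively and cancel the blow-up. It is precisely A4 that rules out such cancellation and forces the full sum to diverge. The scalar condition $\sum_l z_l\leq 0$ defining $Q$ is perfectly tailored to this interplay, and I expect this coupling, rather than any compactness or limiting argument, to be the only real content of the lemma.
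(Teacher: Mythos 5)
Your argument is correct and follows essentially the same route as the paper: negate the claim, extract a convergent subsequence in the compact set $\Delta$ whose limit has a zero coordinate, invoke A5 to make $\max_l z_l(p^n)$ blow up, and use the lower bound from A4 to force $\sum_l z_l(p^n)>0$, contradicting the definition of $Q$. Your closing remark correctly identifies the A4--A5 coupling as the substantive content, and your care in choosing $\epsilon$ strictly below $\delta$ to land in the \emph{interior} of $\Delta^\epsilon$ is a nice touch.
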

\begin{proof}First, the set $Q$ is non-empty as the price vector $\bar{q}$ belongs to $Q$ by Walras' law, i.e., $\frac{1}{L}\sum_l z_l(\bar{q})=0$. Next, I show, by contradiction, that there exists an $\epsilon\in(0,\frac{1}{L}]$ such that $Q\subseteq\mbox{int}\Delta^\epsilon$. Suppose that for all $\epsilon\in(0,\frac{1}{L}]$ there exists a price vector $p\in Q$ such that $p\notin\mbox{int}\Delta^\epsilon$. Consider a sequence of $\{\epsilon^n\}$ with $\epsilon^n=\frac{1}{nL}$. Then, there exists a sequence of price vectors $\{p^n\}$ such that $p^n\in Q$ and $p^n\notin \mbox{int}\Delta^{\epsilon^n}$ for all $n$. Since the sequence $\{p^n\}$ belongs to the compact set $\Delta$, there is a subsequence $p^{k_n}\rightarrow p$ with $p\in\Delta$. As $p^{k_n}\notin\mbox{int}\Delta^{\epsilon^{k_n}}$ for all $n$, it follows that, for all $n$, $p_l^{k_n}\leq\epsilon^{k_n}$ for some $l$. Then, I can conclude that $p_l=0$ for some $l$ because $\epsilon^{k_n}\rightarrow 0$. Hence, $\mbox{max}\{z_1(p^{k_n}),\dots,z_L(p^{k_n})\}\rightarrow\infty$ by A5. Moreover, since $z(\cdot)$ has a lower bound by A4, the inequality $\mbox{max}\{z_1(p^{k_n}),\dots, z_L(p^{k_n})\}-s(L-1)<\sum_{l}z_l(p^{k_n})$ holds for all $n$. As the left hand side converges to infinity by the previous result, there exists an $m$ such that $\mbox{max}\{z_1(p^{k_n}),\dots, z_L(p^{k_n})\}-s(L-1)>0$ for all $n>m$. But then, it follows that $\sum_l z_l(p^{k_n})>0$ for all $n>m$. As $p^{k_n}\in Q$ for all $n$, it also follows that $\sum z_l(p^{k_n})\leq 0$ for all $n$, a contradiction. Therefore, there exists an $\epsilon\in(0,\frac{1}{L}]$ such that $Q\subseteq \mbox{int}\Delta^\epsilon$.\end{proof}
The next step consists in defining the continuous function on $\Delta^\epsilon$ to itself required by the Brouwer fixed point theorem. A possible approach would be to consider the correspondence
$$\mu(p)=\underset{q\in \Delta^\epsilon}{\mbox{arg max}}\{q\cdot z(p)\},$$
which may be multivalued, and apply the Kakutani fixed point theorem. As noted in Debreu (1959), the use of this correspondence is motivated by the idea that if there is excess demand (supply) in a market the price should increase (decrease). Differently, I follow the approach of Geanakoplos (2003) who introduce a perturbation, $-\|q-p\|^2$, to make the correspondence above single-valued.\footnote{The symbol $\|x-y\|$ denotes the Euclidean distance between the vectors $x$ and $y$.} The next lemma deals with the continuous function $\phi(\cdot)$ on which I will apply the Brouwer fixed point theorem.
\begin{lemma}Let $\phi:\Delta^\epsilon\rightarrow \Delta^\epsilon$  be such that
$$
\phi(p)=\underset{q\in \Delta^\epsilon}{\mbox{arg\ max}}\left\{q\cdot z(p)-\|q-p\|^2\right\}.$$
Under Assumptions A1-A5, $\phi(\cdot)$ is a continuous function.\end{lemma}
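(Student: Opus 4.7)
My plan is to establish two things in order: first that $\phi$ is well-defined as a function (i.e., the argmax is a singleton for every $p$), and then that it is continuous.

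For well-definedness, for each fixed $p\in\Delta^\epsilon$ the objective $q\mapsto q\cdot z(p)-\|q-p\|^2$ is the sum of a linear function of $q$ and the strictly concave function $-\|q-p\|^2$, hence is itself strictly concave and continuous in $q$. Since $\Delta^\epsilon$ is non-empty, compact, and convex by the Proposition, the maximum exists and is attained at a unique point, so $\phi(p)$ is a well-defined element of $\Delta^\epsilon$.

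For continuity I would proceed directly with a subsequence argument rather than quoting Berge's maximum theorem, to keep the proof self-contained. Take any sequence $p^n\to p$ in $\Delta^\epsilon$. Since $\{\phi(p^n)\}$ lies in the compact set $\Delta^\epsilon$, some subsequence satisfies $\phi(p^{k_n})\to q^*$ with $q^*\in\Delta^\epsilon$. For every $q\in\Delta^\epsilon$, the definition of argmax gives
$$\phi(p^{k_n})\cdot z(p^{k_n})-\|\phi(p^{k_n})-p^{k_n}\|^2\geq q\cdot z(p^{k_n})-\|q-p^{k_n}\|^2.$$
Passing to the limit, using continuity of $z(\cdot)$ from A1 and joint continuity of the Euclidean norm, yields
$$q^*\cdot z(p)-\|q^*-p\|^2\geq q\cdot z(p)-\|q-p\|^2$$
for every $q\in\Delta^\epsilon$. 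By the uniqueness established in the previous step, $q^*=\phi(p)$. Hence every convergent subsequence of $\{\phi(p^n)\}$ has the same limit $\phi(p)$, and since the sequence lives in a compact set this forces $\phi(p^n)\to\phi(p)$.

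The only point that requires any care is the passage to the limit in both the linear term $q\cdot z(p^{k_n})$ and the quadratic penalty $\|q-p^{k_n}\|^2$ simultaneously; both are routine given A1 and the joint continuity of the penalty, so I do not expect a real obstacle. The core structural insight is simply that the quadratic perturbation turns Debreu's multivalued correspondence into one with a strictly concave objective, which is exactly what delivers both single-valuedness and the ability to pass limits cleanly.
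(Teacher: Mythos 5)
Your proposal is correct and follows essentially the same route as the paper: strict concavity of the quadratically perturbed objective on the non-empty compact convex set $\Delta^\epsilon$ gives existence and uniqueness of the maximizer, and a compactness/subsequence argument with a limit passage in the argmax inequality plus uniqueness gives continuity. The only (welcome) difference is that you spell out the final step that every convergent subsequence of $\{\phi(p^n)\}$ shares the limit $\phi(p)$, which the paper leaves implicit.
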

\begin{proof}Define the function $g(q,p)=q\cdot z(p)-\|q-p\|^2$. Since I consider only prices belonging to $\Delta^\epsilon$, $g(\cdot,\cdot)$ is continuous as it is a sum of continuous function. Let $p$ be a constant. Then, $g(\cdot,p)$ has a maximum point on the non-empty compact set $\Delta^\epsilon$ by the Weierstrass Theorem. It is immediate to verify that the square of the Euclidean distance is strictly convex. In fact, its Hessian matrix is a diagonal matrix with positive entries and it is then positive definite. Then, $g(\cdot,p)$ is strictly concave because it is a sum of a linear function and a strictly concave function. But then, the maximum is unique and $\phi(\cdot)$ is a function. Finally, I prove that $\phi(\cdot)$ is continuous. Let $q^*=\phi(p)$ be the unique maximum point of $g(\cdot,p)$. Since $\Delta^\epsilon$ is compact, I can consider a sequence $p^n\rightarrow p$ and a corresponding subsequence $\{p^{k_n}\}$ such that $\phi(p^{k_n})\rightarrow r$ with $r\in\Delta^\epsilon$. By the definition of $\phi(\cdot)$, the following inequality holds $g(q^*,p^{k_n})\leq g(\phi(p^{k_n}),p^{k_n})$. As the function $g(\cdot,\cdot)$ is continuous, it follows that $g(q^*,p^{k_n})\rightarrow g(q^*,p)$ and $g(\phi(p^{k_n}),p^{k_n})\rightarrow g(r,p)$. Then, $g(q^*,p)\leq g(r,p)$ by the properties of sequences. Since the maximum point is unique, it follows that $r=q^*$. But then, if the subsequence $p^{k_n}\rightarrow p$, it follows that $\phi(p^{k_n})\rightarrow q^*=\phi(p)$. Hence, $\phi(\cdot)$ is continuous.\end{proof}
I finally prove the existence theorem
\begin{proof} By the results in Lemmas 1 and 2, I can apply the Brouwer fixed point theorem and then there exists a fixed point $p^*$ of the function $\phi(\cdot)$, i.e., $p^*=\phi(p^*)$. Next, I show that $p^*\in\mbox{int}\Delta^\epsilon$. Since $p^*$ is the unique maximum point of $g(\cdot,p^*)$ on $\Delta^\epsilon$ and $g(p^*,p^*)=0$ by Walras' law, it follows that 
$$g(\alpha p+(1-\alpha)p^*,p^*)=(\alpha p+(1-\alpha)p^*)\cdot z(p^*)-\|(\alpha p+(1-\alpha)p^*)-p^*\|^2<0,$$
for each $\alpha\in (0,1]$ and $p\in\Delta^\epsilon$ with $p\neq p^*$. By applying the Walras Law the previous inequality simplifies in
$$p\cdot z(p^*)<\alpha\|p-p^*\|^2,$$
for each $\alpha\in (0,1]$. Suppose now that there exists a $p\in \Delta^\epsilon$ such that $p\cdot z(p^*)>0$. Then, there exists a sufficiently small $\bar{\alpha}\in(0,1]$ such that $p\cdot z(p^*)>\bar{\alpha}\|p-p^*\|^2$, a contradiction. Hence, $p\cdot z(p^*)\leq 0$ for all $p\in\Delta^\epsilon$. Since $\bar{q}\in \Delta^\epsilon$, it follows that $\bar{q}\cdot z(p^*)=\frac{1}{L}\sum_lz_l(p^*)\leq 0$ which implies that $p^*\in Q$. But then, $p^*\in \mbox{int}\Delta^{\epsilon}$ by Lemma 1.  Furthermore, note that $p^*$ maximises $p\cdot z(p^*)$ on $\Delta^\epsilon$ because $p^*\cdot z(p^*)=0$, by Walras' law, and $p\cdot z(p^*)\leq 0$ for all $p\in\Delta^\epsilon$, by the previous result. Finally, I show that the fixed point $p^*$ is an equilibrium price vector, i.e., $z(p^*)=0$. I proceed by contradiction and I suppose that there exists a commodity $l$ such that $z_l(p^*)<0$. As $p^*$ maximises $p\cdot z(p^*)$ on $\Delta^\epsilon$, it follows that $p_l^*=\epsilon$. But $p^*\in \mbox{int}\Delta^{\epsilon}$, a contradiction. Hence, $z_l(p^*)\geq 0$ for all $l$. By the fact that $\sum_l z_l(p^*)\leq 0$, I can conclude that $z(p^*)=0$.\end{proof}
%Usando la boundary condition adeguatamente e' possibile usare il teorema di Kuhn Tucker

\section{Fixed point theorems and equilibrium prices}
In the literature there are alternative approaches to prove the existence of equilibrium prices which do not rely on fixed point theorems (see Greenberg (1977) and Quah (2008) among others). However they require stronger assumptions such as the weak gross substitutability or the weak axiom of revealed preference. This is due to the fact there is an equivalence between fixed point theorems and the existence of equilibrium prices under the classical assumptions as shown by Uzawa (1962). Debreu (1982) also pointed out that the proof of existence of equilibrium prices, under the classical assumptions, requires mathematical tools of the same power as fixed point theorems.\par
As the Brouwer fixed point theorem was published in 1911, it is not surprising that the problem of existence of equilibrium prices formulated by Walras (1874-7) was solved in a general way by McKenzie (1954) and Arrow and Debreu (1954) for the first time.\par
Under more restrictive assumptions some rigorous existence results were also obtained by A. Wald and J. von Neumann in the 1930s (for modern explanation of Wald's result see John, 1999). It is also worth mentioning that A. Wald wrote another paper on the existence problem in 1935 which unfortunately went lost. Duppe and Weintraub (2016) gave a detailed history of this lost proof and clarified that it applied a fixed-point theorem to show the existence of equilibrium prices in exchange economies.

{\small }\end{document}